\documentclass[submission,copyright,creativecommons]{eptcs}
\providecommand{\event}{AFL 2026}
\usepackage{amsmath,amssymb,amsfonts,amsthm,thmtools}
\usepackage{array,setspace,multicol,multirow}
\usepackage{hyperref,enumerate,verbatim,graphicx,url}
\usepackage{tikz}
\usetikzlibrary{calc}
\usetikzlibrary{positioning,automata,fit,arrows.meta}
\tikzset{arrows={[scale=1.1]}}
\tikzset{every edge/.style={draw,->,>=Stealth,auto}}
\tikzset{every node/.append style={minimum size=1.5cm}}
\tikzset{every state/.append style={minimum size=1.5cm}}
\usepackage{anyfontsize,lmodern,xspace}
\usepackage{cleveref}
\newcommand{\Cerny}{\v{C}ern{\'y}\xspace}

\newtheorem{theorem}{Theorem}
\newtheorem{corollary}[theorem]{Corollary}
\newtheorem{lemma}[theorem]{Lemma}
\newtheorem{proposition}[theorem]{Proposition}
\newtheorem{conjecture}[theorem]{Conjecture}
\newtheorem{openproblem}[theorem]{Open Problem}

\DeclareSymbolFont{rsfscript}{OMS}{rsfs}{m}{n}
\DeclareSymbolFontAlphabet{\mathrsfs}{rsfscript}
\DeclareMathOperator{\lspan}{span}

\usepackage{iftex}
\ifpdf
  \usepackage{underscore}         
  \usepackage[T1]{fontenc}        
\else
  \usepackage{breakurl}           
\fi
\renewcommand{\O}{\mathcal{O}}

\title{Synchronizing Automata:\\Open Problems\footnote{This work was supported by the National Science Centre, Poland under project number 2021/41/B/ST6/03691.}}
\author{Marek Szyku{\l}a
\institute{Institute of Computer Science, University of Wroc{\l}aw, Wroc{\l}aw, Poland}
\email{msz@cs.uni.wroc.pl}}
\def\titlerunning{Synchronizing Automata: Open Problems}
\def\authorrunning{M. Szyku{\l}a}
\begin{document}
\maketitle

\begin{abstract}
We survey selected open problems in the theory of synchronizing automata, centered around the famous \v{C}ern{\'y} conjecture.
A deterministic finite automaton is called \emph{synchronizing} if it admits a \emph{reset word} whose action maps all states to a single state.
The \v{C}ern{\'y} conjecture states that every synchronizing automaton with $n$ states possesses a reset word of length at most $(n-1)^2$.
We discuss avoiding words, compressing a state with another, synchronization of a (given or any) subset, complexity of deciding the synchronizability, average reset threshold, and linear-algebraic methods.
Some new auxiliary results are also presented.
\end{abstract}
\section{Introduction}

We survey some open problems from the core of the theory of synchronizing automata, and we present their state-of-the-art.
We take the perspective focused on general (unrestricted) automata rather than on particular subclasses with additional assumptions.
The paper also includes some new auxiliary and motivational results, which emerged from attempts of solving the problems.

We refer to excellent recent Volkov's surveys:
The general survey around the \Cerny conjecture \cite{Volkov2022Survey} and the list of subclasses of automata for which a better upper bound than the general one is known \cite{Volkov2026Survey}.

The first four problems seem to be closely related to each other and to the \Cerny problem; they also seem to be of similar difficulty.
Hence, progress in any of them will likely be helpful in solving the others.
We conclude them with a generalization.

In the second part, we discuss the fundamental computational problem of checking whether an automaton is synchronizing, the average reset threshold (where recently a breakthrough result has appeared \cite{CP2025ShortSynchronizingWordsForRandomAutomata}), and some ideas for bypassing limitations of the linear-algebraic methods (which are widely applicable in the theory of synchronizing automata).

Most of the concepts touched can be visually illustrated and interactively studied using the \emph{SynchroViewer} tool \cite{SynchroViewer}.

\subsection{Preliminaries}

We consider deterministic finite complete semiautomata $\mathrsfs{A}=(Q,\Sigma,\delta)$, where $Q$ is the set of \emph{states}, $\Sigma$ is the \emph{alphabet}, and $\delta\colon Q \times \Sigma \to Q$ is the \emph{transition function}, extended naturally to a function $Q \times \Sigma^* \to Q$.
Throughout the paper, by $n$ we always denote the number of states $|Q|$.

The \emph{image} of a subset $S \subseteq Q$ by the action of a word $w \in \Sigma^*$ is $\delta(S,w) = \{\delta(q,w) \mid q \in S\}$, and its \emph{preimage} is $\delta^{-1} = \{q \in Q \mid \delta(q,w) \in S\}$.

A word $w \in \Sigma^*$ is \emph{reset} (or \emph{synchronizing}) if $|\delta(Q,w)|=1$, i.e., every state is mapped by the action of $w$ to the same state.
An automaton is \emph{synchronizing} if it admits a reset word.

We say that a word $w$ \emph{compresses} a (necessarily nonempty) subset $S \subseteq Q$ if $|\delta(S,w)|<|S|$, and it \emph{extends} a subset $S \subsetneq Q$ if $|\delta^{-1}(S,w)|>|S|$.
Also, we say that a word $w$ \emph{synchronizes} a subset $S \subseteq Q$ if $|\delta(S,w)|=1$.

An automaton is \emph{strongly connected} if for every pair of states $p,q \in Q$, there is a word $w$ with $\delta(p,w) = q$.
When considering synchronization problems, the most interesting cases are strongly connected synchronizing automata, as the \Cerny problem can be reduced to them.

\section{The Problems}

\subsection{The \Cerny Conjecture}

The famous \Cerny conjecture is the following:

\begin{conjecture}[\Cerny]
If an $n$-state automaton is synchronizing, then it has a reset word of length at most $(n-1)^2$.
\end{conjecture}

The bound is met for every $n$ by the \Cerny automata \cite{Cerny1964} (\Cref{fig:Cerny}).

\begin{figure}[htb]\large\centering\begin{tikzpicture}[node distance=2cm,scale=0.8,every node/.style={transform shape}]
\tikzset{every loop/.style={min distance=.5cm,looseness=6}}
\node[state] (q0) {$q_0$};
\node[state] [below right=.5cm and 2cm of q0] (q1) {$q_1$};
\node[state] [below right=3cm and 2cm of q0] (q2) {$q_2$};
\node[state,draw=none] [below=4.5cm of q0] (qdots) {$\ldots$};
\node[state] [below left=3cm and 2cm of q0] (qn-2) {$q_{n-2}$};
\node[state] [below left=.5cm and 2cm of q0] (qn-1) {$q_{n-1}$};

\draw[->,>=Stealth] (q0) to[bend left=12] node[midway,above]{$a,b$} (q1);
\draw[->,>=Stealth] (q1) to[bend left=12] node[midway,above,right=.1cm]{$a$} (q2);
\draw[->,>=Stealth] (q2) to[bend left=12] node[midway,below]{$a$} (qdots);
\draw[->,>=Stealth] (qdots) to[bend left=12] node[midway,below]{$a$} (qn-2);
\draw[->,>=Stealth] (qn-2) to[bend left=12] node[midway,above,left=.1cm]{$a$} (qn-1);
\draw[->,>=Stealth] (qn-1) to[bend left=12] node[midway,above]{$a$} (q0);

\draw[->,>=Stealth](q1) to[in=20,out=50,looseness=7] node[midway,above]{$b$} (q1);
\draw[->,>=Stealth](q2) to[in=-20,out=-50,looseness=7] node[midway,below]{$b$} (q2);
\draw[->,>=Stealth](qdots) to[in=-75,out=-105,looseness=7] node[midway,below]{$b$} (qdots);
\draw[->,>=Stealth](qn-2) to[in=-130,out=-160,looseness=7] node[midway,below]{$b$} (qn-2);
\draw[->,>=Stealth](qn-1) to[in=-200,out=-230,looseness=7] node[midway,above]{$b$} (qn-1);
\end{tikzpicture}
\caption{The \Cerny automaton with $n$ states.}\label{fig:Cerny}
\end{figure}

The best upper bound is cubic and was built incrementally.
The classic Pin-Frankl bound $(n^3-n)/6-1$ ($n \ge 4$) \cite{Pin1972Utilisation}, which remained the best known for 33 years, follows as the sum of independent upper bounds for compressing a subset of the given size, i.e., for a subset $S \subseteq Q$ of size $|S|>1$, we bound the length of the shortest word such that $|\delta(S,w)|<|S|$ by $\binom{n-|S|+2}{2}$ \cite{Fr1982}.

The upper bound was slightly improved by mixing it with another compression technique through \emph{avoiding words}: ${\sim}0.1664 n^3 + \O(n^2)$ \cite{S2018ImprovingTheUpperBound}.
This bound was then improved to ${\sim}0.1654 n^3 + o(n^3)$ by finding the precise minima of certain functions \cite{Shitov2019}, and it currently remains the best known one.

\subsection{Avoiding Words}

The avoiding word problem is similar to synchronization: we want to find a word that ensures the automaton will \emph{not} be in a given state.
Avoiding words first appeared in~\cite{Tr2011ModifyingUpperBound} but under a different name.

A word $w \in \Sigma^*$ is \emph{avoiding} a state $q \in Q$ if $q \notin \delta(Q,w)$, i.e., whenever we apply $w$, we know that the automaton \emph{cannot} be in $q$.
A state $q \in Q$ is \emph{avoidable} if it admits an avoiding word.
In a synchronizing automaton with $n \ge 2$ states, all states are avoidable, since we can use a reset word, followed optionally by a letter that maps the single state to another one.

It may be useful to note that a word avoiding a state $q$ is equivalent to extending $Q \setminus \{q\}$ to $Q$.

The main open problem concerning avoiding words is bounding the \emph{avoiding threshold} -- the length of the shortest words avoiding a given state $q$, or generally, the maximum length such that every avoidable state can be avoided with a word no longer than that.

For example, the avoiding threshold of a state $q_i$ in the \Cerny automaton from~\Cref{fig:Cerny} is $i+1$, hence the avoiding threshold of the automaton is $n$.

\begin{conjecture}[Avoiding threshold \cite{KKS2016ExperimentsWithSynchronizingAutomata}]
In a synchronizing automaton with $n \ge 2$ states, every state has an avoiding word of length at most $2n-2$.
\end{conjecture}

However, there are only two known particular (over a minimal alphabet) examples reaching the bound $2n-2$, and $2n-3$ is the best known lower bound for all $n \ge 6$ \cite{FSV2021LowerBoundsOnAvoidingThresholds}.

Avoiding can also be considered from a given subset $S \subseteq Q$ instead of from $Q$.
Then a word $w$ is \emph{avoiding} a state $q$ \emph{from} $S$ if $q \notin \delta(S,w)$.

The best upper bound on avoiding threshold is quadratic, obtained through the following lemma:
\begin{lemma}[{\cite[Rephrased Lemma~1]{S2018ImprovingTheUpperBound}}]\label{lem:AvoidOrCompress}
Let $(Q,\Sigma,\delta)$ be an $n$-state automaton, $S \subseteq Q$ be non-empty, and $A \subsetneq S$ be non-empty.
If any state from $A$ is avoidable, then there exists a word $w$ length at most $n-|A|$ satisfying either (1) $A \nsubseteq \delta(S,w)$ (avoid) or (2) $|\delta(S,w)| < |S|$ (compress).
\end{lemma}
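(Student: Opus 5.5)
The plan is to argue by contradiction, tracking how the images of $S$ behave under all short words. Assume that no word of length at most $n-|A|$ satisfies (1) or (2). Then every such word $w$ acts injectively on $S$, so $|\delta(S,w)|=|S|$, and moreover $A \subseteq \delta(S,w)$. Fix a state $q \in A$ that is avoidable, and let $u$ be a word with $q \notin \delta(Q,u)$; in particular $q \notin \delta(S,u)$. The goal is to show that the constraint ``$A$ is never avoided and $S$ is never compressed'' stabilizes within $n-|A|$ steps into a closed structure. Such a structure would persist for arbitrarily long words, contradicting the existence of $u$.

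Concretely, I would define the decreasing chain
\[
Z_k = \bigcap_{|w|\le k} \delta(Q,w), \qquad k \ge 0 .
\]
Here $Z_0 = Q$, $Z_0 \supseteq Z_1 \supseteq \cdots$, and $Z_{k+1} = Z_k \cap \bigcap_{a\in\Sigma}\delta(Z'_k,a)$ for a suitable description of the preimages. Alongside it I would track $S$ through the injective, non-compressing words. Under the assumption, $A \subseteq \delta(S,w) \subseteq \delta(Q,w)$ for every $|w| \le n-|A|$, so $A \subseteq Z_k$ for all $k \le n-|A|$. The chain starts at size $n$ and never drops below $|A|$, so it can strictly decrease at most $n-|A|$ times. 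Hence some index $k < n-|A|$ has $Z_k = Z_{k+1}$, or else $Z_{n-|A|}$ is forced down to exactly $A$.

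The next step is to show that once the chain is stationary, it is stationary forever. I would do this using injectivity on $S$: each non-compressing letter $a$ induces a bijection from $\delta(S,w)$ onto $\delta(S,wa)$. Hence the family $\{\delta(S,w)\}$ of equal-size sets containing $A$ is mapped into itself by every letter. From this I would conclude that $A \subseteq \delta(S,w)$ for all words $w$, of any length. This contradicts $q \notin \delta(S,u)$.

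The main obstacle is exactly this stabilization step. Unlike the classical preimage-extension argument, the recursion for $\delta(S,w)$ does not depend only on the current intersection, because $S$ is not $Q$. One must therefore use the non-compression hypothesis, i.e.\ bijectivity on $S$, to make the recursion ``memoryless''. My first attempt would be to work dually with preimages, $Y_k = \bigcup_{|w|\le k}\delta^{-1}(Q\setminus A, w)\cap S$. For these, $Y_{k+1}=Y_k\cup(\delta^{-1}(Y_k,\Sigma)\cap S)$ holds directly, and a failure to grow gives a closed set at once. I would then translate the closure back into the statement about images via the bijections, and check that the counting still yields the bound $n-|A|$ rather than $|S|-|A|$ plus a correction.
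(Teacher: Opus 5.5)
There is a genuine gap. It sits exactly at the stabilization step you flag: none of your set-valued chains has the property ``stationary once, stationary forever'', and your argument for it is circular.

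The chain $Z_k$ does not involve $S$ and satisfies no memoryless recursion, because the image of an intersection is not the intersection of the images. In fact $Q\setminus Z_k$ is just the set of states avoidable by words of length at most $k$. If $Z_k=Z_{k+1}$ forced stationarity, every avoidable state would be avoidable by a word of length at most $n$. That contradicts the known synchronizing automata with avoiding threshold $2n-3$ for $n\ge 6$. So the hypothesis on $S$ would have to carry the whole argument, and the only place you use it is the claim that the family $\{\delta(S,w)\}$ is mapped into itself by every letter. But your hypothesis only concerns words of length at most $n-|A|$. For $|w|=n-|A|$ it says nothing about $\delta(S,wc)$, so this closure is precisely what has to be proved.

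The dual chain does not help either. The recursion $Y_{k+1}=Y_k\cup(\delta^{-1}(Y_k,\Sigma)\cap S)$ is false: when $\delta(p,cw')\notin A$, the intermediate state $\delta(p,c)$ need not lie in $S$. Even a corrected union over words (``some short word sends $p$ outside $A$'') does not express avoidance. Avoidance needs a \emph{single} word $w$ with $\delta^{-1}(\{a\},w)\cap S=\emptyset$ for some $a\in A$, and the union also does not see compression at all.

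The missing idea is to linearize, as in the linear-algebraic method behind the cited result. Put $L_i=\lspan\{[S][w] : w\in\Sigma^{\le i}\}$. Since $L_{i+1}=\lspan([S])+\sum_{c\in\Sigma}L_i[c]$, a single equality $L_i=L_{i+1}$ makes $L_i$ invariant under every $[c]$. Hence $[S][w]\in L_i$ for all words $w$; this is the memoryless property your chains lack.

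If $w$ neither compresses $S$ nor avoids a state of $A$ from $S$, then $[S][w]=[\delta(S,w)]$ lies in
\[ W=\Bigl\{x\in\mathbb{R}^n : x_a=\tfrac{1}{|S|}\textstyle\sum_{p\in Q}x_p \text{ for all } a\in A\Bigr\}. \]
The $|A|$ defining constraints are linearly independent because $A\neq Q$, so $\dim W=n-|A|$.

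Under your contradiction hypothesis, $L_0\subseteq\cdots\subseteq L_{n-|A|}\subseteq W$ with $\dim L_0=1$. So $L_i=L_{i+1}$ for some $i<n-|A|$, whence $[S][u]\in W$. But $[S][u]$ has entry sum $|S|$ and entry $0$ at the avoidable state $q\in A$, while membership in $W$ forces that entry to be $1$. This gives the bound $n-|A|$ directly, with no $|S|$-dependent correction.
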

By its iterative application, it follows that the upper bound for avoiding one state ($|A|=1$) is $(n-2)(n-1)+2$.

\begin{figure}[htb]\large\centering\begin{tikzpicture}[node distance=3cm,scale=0.8,every node/.style={transform shape}]
\tikzset{every loop/.style={min distance=.5cm,looseness=6}}
\node[state] (q0){$q_0$};
\node[state] [right of=q0] (q1){$q_1$};
\node[state] [right of=q1] (q2){$q_2$};
\node[state,,draw=none] [right of=q2] (qdots){$\ldots$};
\node[state] [right of=qdots] (qn-2){$q_{n-2}$};
\node[state] [below of=q2] (qn-1){$q_{n-1}$};

\draw[->,>=Stealth] (q0) to node[midway,above]{$a$} (q1);
\draw[->,>=Stealth] (q1) to node[midway,above]{$a$} (q2);
\draw[->,>=Stealth] (q2) to node[midway,above]{$a$} (qdots);
\draw[->,>=Stealth] (qdots) to node[midway,above]{$a$} (qn-2);
\draw[->,>=Stealth] (qn-2) to node[midway,below]{$a,b$} (qn-1);
\draw[->,>=Stealth] (qn-1) to[bend left=16] node[midway,below]{$b$} (q0);

\draw[->,>=Stealth] (q0) to node[midway,above]{$b$} (qn-1);
\draw[->,>=Stealth] (q1) to[bend right=4] node[above]{$b$} (qn-1);
\draw[->,>=Stealth] (q2) to node[left]{$b$} (qn-1);
\draw[->,>=Stealth] (qdots) to[bend left=4] node[midway,above]{$b$} (qn-1);

\draw[->,>=Stealth](qn-1) to[in=250,out=290,looseness=5] node[midway,below]{$a$} (qn-1);
\end{tikzpicture}
\caption{Automata where from $\{q_0,q_{n-1}\}$, the shortest word avoiding $q_{n-1}$ or compressing the pair is $a^{n-1}$.}\label{fig:LongAvoidingOrCompress}
\end{figure}

\Cref{fig:LongAvoidingOrCompress} shows that \Cref{lem:AvoidOrCompress} sometimes gives a tight upper bound, but we only know this for $|S|=2$ and $|A|=1$.
We can extend this for every $|S| \ge 2$ by adding isolated states, but that would not yield a synchronizing or strongly connected automaton.
But certainly not for all $|S|$, $|A|$; for instance, the shortest compressing words are shorter for large subsets $|S| \in n - \O(1)$.
It is not known whether the upper bound can be improved in other cases, which leads to the following open problem:
\begin{openproblem}
What are the tight upper bounds in~\Cref{lem:AvoidOrCompress}, depending on $|S|$ and $|A|$, in particular for a synchronizing and strongly connected automaton?
\end{openproblem}

Proving a linear upper bound on the length of avoiding words implies the upper bound $\frac{7}{48} n^3 + \O(n^2)$ on the reset threshold.
Providing good upper bounds for the more general concept of avoiding more than one state at once leads to bigger improvements.
More precisely, to achieve a subcubic upper bound on reset threshold, it would be sufficient to obtain a subquadratic upper bound on avoiding a subset of states of a logarithmic size.
For non-synchronizing automata, already for a pair of states we may need a word of length $\varTheta(n^2)$ to avoid it \cite[Theorem~9]{FSV2021LowerBoundsOnAvoidingThresholds}, and generally, $\varTheta(n^k)$ is a tight upper bound for avoiding $k$ states.
However, for synchronizing automata, using the technique might be possible, as we have the following strong conjecture:

\begin{conjecture}[{\cite[Rephrased Conjecture~15]{FSV2021LowerBoundsOnAvoidingThresholds}}]\label{con:k-avoidingThreshold}
For an $n$-state synchronizing automaton, for every subset $X$ of $k < n$ states, there exists an avoiding word for $X$ of length at most $\O(kn)$.
\end{conjecture}

The bound is smaller than needed (linear vs subquadratic), so there is some margin and weaker forms would work as well.

\subsection{Compressing a State with Another}

For a state $q \in Q$, we consider words whose action map $q$ to the same state together with some other state $p \ne q$.
A word $w \in \Sigma^*$ \emph{compresses a state with another} if $\delta(q,w) = \delta(p,w)$ for some $p \in Q \setminus \{q\}$.
We call the length of the shortest such words the \emph{compress-with-another} threshold.
This problem is very natural and closely related to the others, but so far largely overlooked and has not appeared in the published literature.

\begin{conjecture}[Compress-with-another threshold]\label{con:CompressWithAnother}
For a synchronizing strongly connected $n$-state automaton ($n \ge 2$), for every state, the compress-with-another threshold is at most $\O(n)$.
\end{conjecture}

The assumption that the automaton is strongly connected is essential.
Without it, it is easy to construct automata with a state with a quadratic compress-with-another threshold.

\begin{proposition}
For every $n \ge 2$, there exists a synchronizing $n$-state automaton with a state $q$ whose compress-with-another threshold is $\varTheta(n^2)$.
\end{proposition}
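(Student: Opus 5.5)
The idea is to reduce the compress-with-another threshold of $q$ to the compression threshold of a single designated pair, and to make that pair slow to compress. The upper bound $\O(n^2)$ is routine, and I would dispose of it first. In a synchronizing automaton every pair $\{q,p\}$ is compressed by some word. Consider a shortest such word and the sequence of pairs it visits. These pairs are pairwise distinct: a repeated pair would let us cut out the loop and get a shorter word. So the word has length at most $\binom{n}{2}$. All the work is in a lower-bound construction. It must be synchronizing and, necessarily, not strongly connected.

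For the construction I would take $Q = R \cup \{q\}$. On $R$ we place a synchronizing automaton of $m = n-1$ states, built from \Cerny-like blocks, with a designated state $r^\ast \in R$. The letters act on $q$ so that $\delta(q,w)$ always \emph{shadows} a run inside $R$ while staying outside the current image $\delta(R,w)$. Concretely, $q$ is fixed by every letter except those that map $r^\ast$ elsewhere and leave $r^\ast \notin \delta(R,\cdot)$; such a letter may send $q$ to $r^\ast$. From then on, $q$ collides with another state exactly when the pair $\{\delta(q,w), x\}$ is compressed for some $x \in \delta(R,w)$. If this is arranged so that $q$'s track stays disjoint from the moving image of $R$ until the image of $R$ has shrunk to a single state, then any word compressing $q$ with another state has a prefix that synchronizes $R$. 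Its length is therefore at least the reset threshold of $R$. Choosing $R$ to be (a variant of) the \Cerny automaton with $m$ states makes this $(m-1)^2 = \varTheta(n^2)$. The whole automaton is synchronizing because a reset word of $R$ followed by a word moving $q$ into $R$ and then merging finishes the job. It is not strongly connected, since $q$ is never re-entered.

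The steps, in order, are:
\begin{enumerate}[(1)]
\item Fix the automaton on $R$ and specify precisely which letters move $q$, and where.
\item Prove the invariant: $\delta(q,w) \notin \delta(R,w)$ for every word $w$ such that $|\delta(R,w)| > 1$.
\item Deduce that a word compressing $q$ with another state has length at least the reset threshold of $R$.
\item Exhibit one reset word for the whole automaton, so that it is synchronizing.
\item Combine with the $\binom{n}{2}$ upper bound to obtain $\varTheta(n^2)$.
\end{enumerate}
For small $n$ the statement is trivial.

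The main obstacle is step (2). The naive choices fail immediately. If $q$ is a sink, or if $q$ enters $R$ by a letter that merely frees one state (as the letter $b$ of the \Cerny automaton does), then $q$ can be merged with a neighbour within $\O(1)$ further letters. The pair $\{q,x\}$ behaves like an ordinary pair of the \Cerny automaton, and every such pair is compressed in $\O(n)$ steps. So the invariant must be enforced by the transitions on $q$ themselves. The first thing I would try is to let $q$ track the complement of the current image: $q$ moves only along transitions that keep it in $Q \setminus \delta(R,w)$, and it is fixed otherwise. I would then check carefully that no letter sends $q$ onto a state that is simultaneously the image of some state of $R$ before $R$ collapses. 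If this bookkeeping does not close up with a single copy of $R$, the fallback is a chain of about $\sqrt{n}$ or $n/k$ blocks of linear size. Each block must be passed before $q$ may advance to the next, so that the forced lengths add up to a quadratic total. This keeps the invariant local to each block and easier to verify.
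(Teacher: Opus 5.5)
\textbf{Your framework can never give more than a linear bound, so step (2) is impossible, not just hard.} With $Q=R\cup\{q\}$ and $R$ closed, synchronizability forces some letter $\sigma$ with $\delta(q,\sigma)\in R$. Since $\sigma$ maps the $(n-1)$-element set $R$ into itself, every state of $R$ reaches a $\sigma$-cycle within $n-2$ steps. Hence $\delta(q,\sigma^{n-1})$ is $\sigma$-periodic. On the other hand, $\delta(R,\sigma^{n-1})$ is exactly the set of $\sigma$-periodic states of $R$. So $\sigma^{n-1}$ compresses $q$ with some state of $R$, whatever $R$, $r^\ast$ and the letters are.

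In terms of your steps: if invariant (2) held, $\sigma^{n-1}$ would reset $R$. Then the reset threshold of $R$ over the \emph{full} alphabet would be at most $n-1$; note that this alphabet includes the letters that move $q$, and they act on $R$ too. So step (3) yields nothing quadratic.

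The same argument applied to all of $Q$ gives a general constraint. If $q$ is not periodic under some letter $\sigma$, then $\sigma^{n-1}$ compresses $q$ with another state. So any valid example must have every letter return $q$ to itself, which places $q$ on a non-trivial strongly connected component. Your fallback chain of blocks does not escape this while $q$ remains a single state feeding into a closed part. Beyond that, it is too unspecified to check.

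The paper's proof has the same single-extra-state shape but puts the quadratic cost elsewhere. It adds a state $p$ fixed by $a,b$ and a letter $c$ with $\delta(p,c)=q_0$ that collapses the whole \Cerny part onto $q_{\lfloor n/2\rfloor}$. So the rest is synchronized in one step, and the cost is meant to come from compressing the antipodal pair $\{q_0,q_{\lfloor n/2\rfloor}\}$, which over $\{a,b\}$ alone needs $\Theta(n^2)$ letters.

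As written, however, that argument overlooks that $c$ remains available afterwards. Since $c$ sends both $q_0$ and $q_{\lfloor n/2\rfloor}$ to $q_{\lfloor n/2\rfloor}$, the word $cc$ already compresses $p$ with every other state. By the observation above, no redefinition of $c$ can rescue a construction in which $p$ is the only state outside a closed set. So neither your plan nor the paper's argument establishes the lower bound. A correct example has to be built around the periodicity condition above, with $q$ lying in a component that every letter cycles back to $q$.
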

\begin{proof}
We take the \Cerny automaton on $n-1$ as in~\Cref{fig:Cerny}, and we add a new state $p$ and a new letter $c$.
The actions of $a,b$ fix $p$, whereas:
$\delta(p,c) = q_0$, $\delta(Q \setminus \{p\}) = q_{\lfloor n/2\rfloor}$.

To compress $p$ with another state, one must use $c$ at some point, and then compress $\{q_0,q_{\lfloor n/2\rfloor}\}$.
It is well-known that compressing two states in the \Cerny automaton requires a quadratic word in terms of their distance on the cycle (e.g., \cite{Cardoso2014PhD}).
\end{proof}

For the strongly connected case, compress-with-another threshold rarely exceeds $n$.
We know only one potential candidate for a larger value.

There exists a series with a state whose compress-with-another threshold is likely to be $\frac{4}{3}n$ \cite{D2018SynchronizingAutomataWithExtremalProperties}.
However, this fact seems to be very difficult to prove, and hence we leave it as a conjecture for this construction.

\begin{figure}[!htb]
\begin{center}\includegraphics[scale=1]{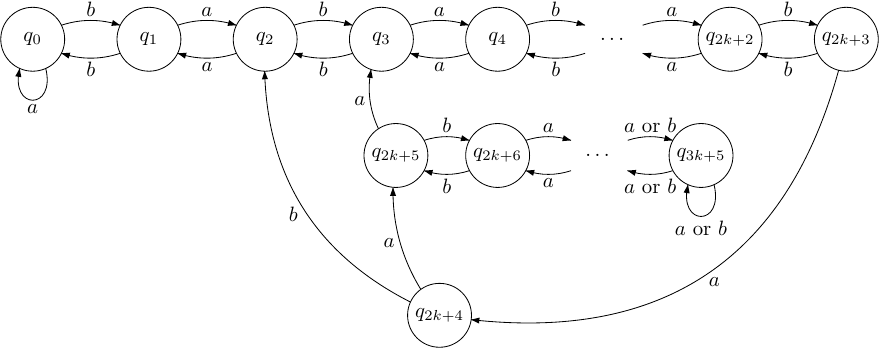}\end{center}
\caption{An automaton where state $q_0$ has possibly a large compress-with-another threshold.}\label{fig:CompressWithAnother}
\end{figure}

Let $k \ge 1$. Let $Q_{3k+6} = \{q_0,\ldots,q_{3k+5}\}$.
We define the automaton $\mathrsfs{A}_{k} = (Q_{3k+6},\{a,b\},\delta_{k})$ illustrated in~\Cref{fig:CompressWithAnother} as follows:\\
\begin{minipage}[t]{.49\textwidth}\raggedright
\indent $\delta_{k}(q_i,a) = \begin{cases}
q_0 & \text{if $i=0$}\\
q_{i+1} & \text{if $1 \le i \le 2k+3$, $2 \nmid i$}\\
q_{i-1} & \text{if $1 \le i \le 2k+3$, $2 \mid i$}\\
q_{i+1} & \text{if $2k+4 \le i < 3k+5$, $2 \mid i$}\\
q_{i-1} & \text{if $2k+7 \le i <  3k+5$, $2 \nmid i$}\\
q_{3} & \text{if $i = 2k+5$}\\
q_{3k+5} & \text{if $ i =  3k+5$, $2 \mid i$}\\
\end{cases}$
\end{minipage}\hfill\noindent\begin{minipage}[t]{.49\textwidth}\raggedleft
\indent $\delta_{k}(q_i,b) = \begin{cases}
q_{i+1} & \text{if $1 \le i \le 2k+3$, $2 \mid i$}\\
q_{i-1} & \text{if $1 \le i \le 2k+3$, $2 \nmid i$}\\
q_{i+1} & \text{if $2k+5 \le i < 3k+5$, $2 \nmid i$}\\
q_{i-1} & \text{if $2k+6 \le i <  3k+5$, $2 \mid i$}\\
q_{2} & \text{if $i = 2k+4$}\\
q_{3k+5} & \text{if $ i =  3k+5$, $2 \nmid i$}\\
\end{cases}$\\
\end{minipage}

The automaton $\mathrsfs{A}_k$ is obviously strongly connected.
However, for the other properties we have only a conjecture:

\begin{conjecture}
For $k \ge 1$, the automaton $\mathrsfs{A}_k$ with $n=3k+6$ states is synchronizing , and the shortest word that compresses $\{q_0,q\}$ for some $q \in Q \setminus \{q_0\}$ is $(ab)^{k+2}a(ab)^{k+2}$ of length $\frac{4}{3} n$.
\end{conjecture}

The series remains unproven, as proving its properties are difficult due to the lack of regularity in the behavior of this series.
It seems to be difficult even to prove that these automata are synchronizing.
The conjecture has been verified computationally for a small number of states $n \le 21$.

If compress-with-another threshold is at most linear (\Cref{con:CompressWithAnother}), then it brings a very simple proof of an $\O(n^2)$ upper bound on the avoiding threshold.
Here it is the idea:
To avoid a state $q$, we find a word $w$ compressing $\{q,p\}$ for some other state $p$.
If the word avoids $q$, we are done.
Otherwise, we apply a shortest word (of length $\le n-1$) whose action maps $q$ to $p$.
If by that, $q$ has not been avoided, then we have both $q,p$ in the image, so the procedure can be repeated.
We repeat it at most $n-1$ times until eventually a singleton is reached, each time appending a word of length at most $\O(n)$.

The problem is also related to the next described one.

\subsection{Synchronization of a Given Subset}

Another natural generalization concerns the length of words needed to synchronize a given subset. For $S=Q$, this is the \Cerny conjecture, and for $|S|$ we know the tight upper bound $n(n-1)/2$.

In general, the following conjecture has been stated by Cardoso:
\begin{conjecture}[{\cite[Conjecture~5.1.1]{Cardoso2014PhD}}]
For a synchronizing $n$-state automaton, every subset $S$ of states can be synchronized with a word of length at most:
\[ (n-1)^2 - \left\lceil\frac{n-|S|}{|S|}\right\rceil (2n - |S| \left\lceil\frac{n}{|S|}\right\rceil - 1) .\]
\end{conjecture}
It has been shown that the \Cerny automata have subsets meeting this bound for every $|S|$.

A better upper bound than the trivial one (obtained from summing the upper bounds on compression) for any $|S| \ge 3$ would improve the known upper bound on the reset threshold.

\subsection{Synchronization of Any \texorpdfstring{$k$-subset}{k-subset}}

The counterpart variation of the previous problem is synchronization of any subset of $k$ states.
Here the subset is not fixed: we ask for a word that is synchronizing for some subset with $k$ states, so this is the question about synchronizing the easiest one.

For $k=2$, this is trivial, as any non-permutational letter does the job.
For $k=3$ and a synchronizing automaton, this was studied under the name \emph{triple rendezvous time} \cite{GonzeJungers2016OnSynchronizingProbabilityFunction}
The best upper bound is quadratic but better than the trivial one, derived using linear-algebraic methods.
\begin{theorem}[{\cite{GonzeJungers2016OnSynchronizingProbabilityFunction}}]
In a synchronizing $n$-state automaton, the shortest words synchronizing a $3$-subset have length at most $0.1545 n^2 + \O(n)$.
Furthermore, the bound $n+3$ is attainable.
\end{theorem}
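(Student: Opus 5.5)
The plan has two independent parts: an upper bound obtained by linear algebra, and a lower-bound family of automata.

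For the upper bound, I would use the synchronizing probability function approach of Jungers. Let $M_w$ be the $0/1$ transition matrix of $w$, and let $\mathbf{1}$ be the all-ones vector. For a probability vector $x$ on $Q$, set $K(x) = \max_{w \in \Sigma^{\le t}} \lVert x^\top M_w \rVert_\infty$. Three facts drive the argument.
\begin{itemize}
\item Choose a non-permutational letter $c$. Then $\mathbf{1}^\top M_c$ has some entry at least $2$.
\item A word synchronizes a $3$-subset exactly when some entry of $\mathbf{1}^\top M_w$ is at least $3$.
\item So it suffices to follow the growth of $\max_w \lVert \mathbf{1}^\top M_w \rVert_\infty$ from $2$ to $3$.
\end{itemize}
In a synchronizing automaton there is a probability vector $\pi$ (a stationary-like combination of the rows of reset words) with $\mathbf{1}^\top M_w \pi$ constant. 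Following Jungers--Gonze, I would define the sets $S_t = \{\, y : y = \mathbf{1}^\top M_w \text{ for some } |w| \le t \,\}$ and study their linear spans. Each new letter either increases the dimension of the span of the reachable weight vectors or produces a vector with a larger maximum entry. This yields a linear-programming relaxation, and its optimum gives the time by which an entry $\ge 3$ must appear.

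The key steps, in order, are:
\begin{enumerate}
\item Show that if no $3$-subset is synchronized by words of length $\le t$, then the weight vectors $\mathbf{1}^\top M_w$ (entries in $\{0,1,2\}$, summing to $n$) satisfy an averaging constraint against $\pi$.
\item Show that the span of these vectors grows by at least one dimension every $O(1)$ steps, or else a bounded-increment constraint holds. This is the standard Kari/Pin dimension-growth argument: if the span stabilizes, it is invariant under all letters and contains $\mathbf{1}$, which contradicts synchronization unless it is the full space.
\item Combine the two, giving an optimization problem over how many coordinates equal $2$ versus $1$ at each time, subject to linear constraints. Solving it asymptotically should produce a constant times $n^2$.
\end{enumerate}
The main obstacle is step 3: extracting the specific constant $0.1545$, which should arise as the optimum of a quadratic program. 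I would set it up continuously (scale time by $n^2$ and coordinates by $n$) and solve it via its KKT conditions. If I could not match the constant, I would at least obtain some $cn^2$ with $c < 1/2$.

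For attainability of $n+3$, I would build an explicit family: a \Cerny-like cycle on letter $a$, together with a letter $b$ that merges only two far-apart pairs. Bringing three states together then requires merging one pair, traversing about $n$ steps around the cycle, and merging again. I would check by a case analysis on where the first non-permutational letter occurs that no word shorter than $n+3$ works, and verify small $n$ computationally to guide the exact construction.
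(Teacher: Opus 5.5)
The survey only cites this theorem from Gonze and Jungers and gives no proof, so your outline has to stand on its own, and its central step fails. The dichotomy you rely on is false. It says that at each length either $\dim\lspan\{\mathbf{1}^\top M_w\}$ grows or $\max_{w,q}|\delta^{-1}(q,w)|$ grows. That maximum is an integer, so after the first non-permutational letter it can only jump from $2$ directly to $3$, while the dimension can grow at most $n-1$ times. The dichotomy would therefore bound the triple rendezvous time by about $n$.

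In the \Cerny automaton, however, the shortest word giving a state three preimages is $ba^{n-1}b$, of length $n+1$. A state with two preimages arises only at $q_1$, right after $b$ has emptied $q_0$, and it must be moved $n-1$ times by $a$ to reach $q_0$ before a further $b$ can add a third preimage. The statement itself also asserts that $n+3$ is attained.

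The Kari--Pin stabilization argument only tells you when the span becomes all of $\mathbb{R}^n$, and only for strongly connected automata, which the theorem does not assume. Full span says nothing about an entry $3$: in the \Cerny automaton the span is already $\mathbb{R}^n$ at length $n-1$, yet no entry $3$ appears before length $n+1$. Your constraint against $\pi$ is vacuous, since $\pi=\mathbf{1}/n$ gives $\mathbf{1}^\top M_w\pi=1$ for every word of every automaton. The optimization problem of step 3 is never written down, so no quadratic bound comes out, let alone $0.1545n^2$.

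The missing ingredient is a progress measure that can grow in fractional increments, together with a quantitative lemma on its growth. In the cited work this is Jungers' synchronizing probability function, essentially $k(t)=\min_x\max_{|w|\le t,\,q}x(\delta^{-1}(q,w))$ over probability vectors $x$, where $x(S)=\sum_{s\in S}x_s$. It is the value of a linear program, and taking $x$ uniform shows that $k(t)>2/n$ forces a state with three preimages. The quadratic bound is then obtained from linear-programming properties of this function. Your $K(x)$ fixes $x$ instead of minimizing over it, and specializing to $x=\mathbf{1}/n$ discards exactly the fractional information such an argument needs.

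For the attainability of $n+3$ you give no automaton, only a search plan, and the heuristic points the wrong way. Merging one pair and travelling once around the cycle is the \Cerny automaton, which yields only $n+1$. Letting $b$ merge two far-apart pairs creates shortcuts rather than delays. Take $a$ the cycle $q_i\mapsto q_{i+1}$, and let $b$ send $q_0\mapsto q_1$ and $q_k\mapsto q_{k+1}$ and fix all other states. Then for $3\le k\le n-3$ both $ba^{k-1}b$ and $ba^{n-k-1}b$ give some state three preimages. So the threshold is at most $\min(k,n-k)+1\le\lfloor n/2\rfloor+1$. Establishing the lower bound requires an explicit automaton, a proof that it is synchronizing, and a proof that no word of length at most $n+2$ maps three states to one.
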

For $k=4,5$, there are better bounds than the trivial one \cite{BJ2022SynchronizingKSets}.
For non-synchronizing automata, the tight bound is $\varTheta(n^k)$ \cite{BJ2022SynchronizingKSets}, thus the situation is very similar to that of the avoiding words.

The case $k=3$ is very similar to the previous problem of compressing a state with another.
Indeed, we necessarily first use a non-permutational letter $a \in \Sigma$ to compress a pair of states, and then we need to compress the state in their image of this pair with some other state -- the only difference is that exactly one state is missing in the image $\delta(Q,a)$ so it cannot be that other state.

We can risk the following:
\begin{conjecture}[cf.\ {\cite[Conjecture~20]{BJ2022SynchronizingKSets}}]
In a synchronizing $n$-state automaton, the shortest words synchronizing any $k$-subset ($2 \le k \le n)$ have length at most $\O(kn)$.
\end{conjecture}
Note that it is very similar to our \Cref{con:k-avoidingThreshold} about avoiding a $k$-subset.

In fact, compression a state with another is somewhere in between both variants of subsets synchronization.
They can be combined into one bigger open problem:
\begin{openproblem}
For a subset $S \subseteq Q$ and an integer $0 \le k \le |Q|-|S|$, find upper bounds on the length of the shortest words $w$ such that $|\delta(S \cup S',w)|=1$ for some $S' \subseteq Q \setminus S$ of size $|S'|=k$.
\end{openproblem}
For compressing a state with another, we set $|S|=\{q\}$ and $k=1$.
Of course, it could be generalized further to include avoiding and compression to a given subset size.

\subsection{Deciding the Synchronizability}

The best-known algorithm for deciding whether a given automaton is synchronizing works in $\O(|\Sigma|\cdot n^2)$ \cite{Ep1990}.
It is a simple algorithm which builds a tree on pairs of states in a backward way to check whether from every pair of states we can reach a singleton.
However, there are no clues whether this is the best possible, in particular regarding the quadratic factor in $n$.
For the simplicity of the following question, let us assume a fixed alphabet.

\begin{openproblem}\label{pro:CheckingIfSynchronizing}
Given $\mathrsfs{A}$ under a fixed alphabet, is there a faster algorithm than $\O(n^2)$ checking whether it is synchronizing?
\end{openproblem}

A possible way to argue for the impossibility of a (significant enough) improvement could be a conditional lower bound, assuming some established hypothesis from computational complexity.
An example of such a problem is the nonempty intersection of languages recognized by two given DFAs:
If two binary tree-shaped DFAs can be solved in $\O(n^{2-\varepsilon})$ time, for any $\varepsilon>0$, then the Strong Exponential Time Hyphothesis (SETH) is false \cite{OliveiraWehar2020FiniteAutomataNonemptinessIntersection} (there are even stronger results than contradicting SETH).

Recall that the synchronizability is equivalent to the condition that all pairs of states are compressible.
Curiously, if one asks whether a \emph{given} pair of states is compressible, the problem cannot be solved in $\O(n^{2-\varepsilon})$.
The reduction is the same as used to show PSPACE-completeness of subset synchronization \cite{Rystsov1983PolynomialCompleteProblems}, which also reduces from the nonempty intersection problem with an unrestricted number of given DFAs.
It is enough to apply it for two input automata and combine the results.
Since the given DFAs are binary and we add only one extra letter to the alphabet, we can also binarize the final construction at the cost of doubling the number of states.

\begin{proposition}
Given an automaton $\mathrsfs{A}=(Q,\Sigma,\delta)$ over a fixed alphabet and a pair of states $p,q \in Q$, there is no algorithm checking whether $\{p,q\}$ is compressible in time $\O(n^{2-\varepsilon})$, for every $\varepsilon>0$, assuming SETH.
\end{proposition}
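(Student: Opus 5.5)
We reduce from the two-DFA intersection nonemptiness problem. By the result of Oliveira and Wehar \cite{OliveiraWehar2020FiniteAutomataNonemptinessIntersection}, under SETH there is no $\O(n^{2-\varepsilon})$ algorithm deciding whether $L(\mathrsfs{B}_1)\cap L(\mathrsfs{B}_2)\neq\emptyset$ for two given binary (tree-shaped) DFAs with $n$ states in total. The plan is a linear-size, linear-time reduction producing an automaton $\mathrsfs{A}$ over a fixed alphabet and a pair $\{p,q\}$ that is compressible if and only if the intersection is nonempty. An $\O(N^{2-\varepsilon})$ algorithm for pair compressibility, with $N=\O(n)$, would then refute SETH.

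\textbf{Construction (Rystsov-style).} Let $\mathrsfs{B}_i=(Q_i,\{a,b\},\delta_i,s_i,F_i)$ with disjoint state sets. Build $\mathrsfs{A}$ on $Q_1\cup Q_2\cup\{z,d\}$ over the alphabet $\{a,b,c\}$. Letters $a,b$ act as $\delta_i$ on $Q_i$ and fix $z$ and $d$. Letter $c$ maps every state of $F_1\cup F_2$ to $z$, every other state of $Q_1\cup Q_2$ to the dead state $d$, and fixes $z$ and $d$. We take $p=s_1$ and $q=s_2$.

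\textbf{Correctness.} Since $Q_1$ and $Q_2$ are disjoint and closed under $a,b$, the images of $p$ and $q$ can only meet after a letter $c$. Consider the first occurrence of $c$ in a word $w=uc\,v$ with $u\in\{a,b\}^*$:
\begin{itemize}
\item If $u\in L(\mathrsfs{B}_1)\cap L(\mathrsfs{B}_2)$, both states go to $z$, so $uc$ compresses the pair.
\item If both $\delta_1(s_1,u)$ and $\delta_2(s_2,u)$ are non-final, both go to $d$, so again the pair is compressed.
\end{itemize}
The second case breaks the intended equivalence, so the construction must be fixed. The fix is to send non-final states of $Q_1$ to $d_1$ and non-final states of $Q_2$ to $d_2$. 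Here $d_1\neq d_2$ are sinks under all letters, and $z$ is a third sink. Afterwards $z,d_1,d_2$ are pairwise distinct fixed points, so any collapse must occur exactly at the first $c$, and it happens exactly when both states land in $z$, i.e.\ when $u$ lies in the intersection.

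\textbf{Binarization and size.} The automaton has $n+3$ states over three letters. To reach a fixed binary alphabet, apply a standard encoding of the letters $a,b,c$ by binary codewords, e.g.\ a prefix code such as $a\mapsto 00$, $b\mapsto 01$, $c\mapsto 1$. Each state gets auxiliary states for decoding, which at most doubles or triples the state count, as the paper notes.

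\textbf{Main obstacle.} The delicate step is checking that binarization cannot create new compressions via partial codewords. It suffices that the auxiliary states of different original states never merge before a full codeword is read, and that sinks stay sinks. With a prefix code this holds: intermediate states are indexed by (original state, partial codeword), so the transitions stay injective wherever the original ones were distinct.

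Since the reduction is linear in size and time, an $\O(N^{2-\varepsilon})$ algorithm for pair compressibility would decide two-DFA intersection in $\O(n^{2-\varepsilon})$, contradicting SETH.
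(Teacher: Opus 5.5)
Your proof is correct and follows essentially the same route as the paper: a Rystsov-style reduction from nonemptiness of intersection of two binary DFAs (hard under SETH by Oliveira--Wehar), adding one extra letter $c$, then binarizing by doubling the states. Your separate sinks $d_1 \neq d_2$ for non-final states are exactly what the pair version needs and what the paper's brief sketch leaves implicit; if the tree-shaped DFAs are partial, their missing transitions can likewise be sent to $d_1$ and $d_2$.
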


In~\cite{KR2025ComplexityOfReachabilityProblems}, it is shown that deciding if an automaton is synchronizing is NL-complete, even if the automaton is binary and strongly connected.
The same holds for checking if a given pair of states is compressible. Unfortunately, the constructions from these results do not seem to help with resolving \Cref{pro:CheckingIfSynchronizing}.

We note that for a random binary automaton, there exists an algorithm deciding if it is synchronizing in $\O(n)$ time \cite{Berlinkov2016OnTheProbabilityToBeSynchronizable}.
By \emph{random}, we mean taking each transition uniformly and independently at random, so each outgoing transition has exactly $n$ choices, and there are $n^{|\Sigma|\cdot n}$ such random automata.
This algorithm works because a random binary automaton is synchronizing with high probability, and its synchronizability can also be confirmed in linear time w.h.p., whereas the remaining cases are treated with the standard quadratic algorithm as a fallback.

\subsection{Average Reset Threshold}

The question about the expected reset threshold of a random $n$-state automaton has been widely studied, e.g., \cite{BS2016AlgebraicSynchronizationCriterionAndComputingResetWords,CP2025ShortSynchronizingWordsForRandomAutomata,Hi1988,KKS2015ComputingTheShortestResetWords,Nicaud2016FastSynchronizationOfRandomAutomata,ST2011}.
The binary case is the most fundamental.
In the unary case, the probability of being synchronizing is exactly $1/n$ (we count the rooted trees on $n$ labeled vertices -- there are $n^{n-1}$ of them versus $n^n$ of all possible letter actions).

However, most efforts concern an easier question: an upper bound on the average reset threshold that holds with high probability.
Of course, the questions are not equivalent, as a certain upper bound can hold with high probability, but the average reset threshold in the remaining cases might be large enough to substantially increase the total average.
Yet, the function is believed to be the same.

The strongest result has been recently shown by Chapuy and Perarnau: 
\begin{theorem}[{\cite{CP2025ShortSynchronizingWordsForRandomAutomata}}]
A random binary automaton has a reset threshold in $\O(\sqrt{n} \log n)$ with high probability.
\end{theorem}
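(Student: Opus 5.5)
The plan has two phases: first shrink the image of $Q$ to about $\sqrt{n}$ states using only the letter $a$, then merge the survivors pair by pair. The arithmetic target is $O(\sqrt n)$ merges, each done by a word of length $O(\log n)$, giving $O(\sqrt n\log n)$ overall.

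\textbf{Phase 1: shrink with a power of $a$.} Treat the two letters as independent uniform random mappings $a,b\colon Q\to Q$. By classical random-mapping statistics, the functional graph of $a$ has these properties:
\begin{itemize}
\item the number of cyclic points is $\Theta(\sqrt n)$ in distribution, concentrated on scale $\sqrt n$;
\item the maximal tail height is $O(\sqrt n)$, with tails light enough that it is $O(\sqrt n\log n)$ with high probability.
\end{itemize}
Hence the word $a^{L}$ with $L = C\sqrt n\log n$ maps $Q$ onto the set $Z$ of cyclic points of $a$, with high probability. This already costs $O(\sqrt n\log n)$ letters.

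\textbf{Phase 2: merge pairs inside $Z$.} Then synchronize $Z$ by repeatedly choosing two states $p,q$ of the current image and appending a word $w_{p,q}$ with $\delta(p,w_{p,q})=\delta(q,w_{p,q})$; each step decreases the image by at least one. The core lemma I would aim for is uniform over pairs:
\begin{quote}
with high probability, every pair $\{p,q\}$ of distinct states is compressed by some word of length at most $C'\log n$.
\end{quote}
I would prove it by exploring the pair automaton from $\{p,q\}$ with the principle of deferred decisions. As long as the explored states are new, their transitions are fresh uniform random choices. The breadth-first exploration over $\{a,b\}$ therefore behaves like a branching process with mean offspring close to $2$. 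After $O(\log n)$ levels it has reached about $\sqrt n$ distinct pairs whose component states are mostly unexplored. Each such pair then collapses under one more fresh letter with probability about $1/n$. A second-moment or birthday argument, run on both endpoints' exploration trees at once, shows that with probability $1-o(n^{-2})$ some collision occurs within $O(\log n)$ further steps. A union bound over the $O(n^2)$ pairs finishes the lemma.

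\textbf{Main obstacle: reconciling the two phases.} The first difficulty is that Phase 1 has revealed the whole of $a$. So in Phase 2 only $b$ is fresh, and the exploration must be carried out within $b$-steps, or with the $a$-structure conditioned on and its contribution to the branching controlled. The second difficulty is the size of $Z$. It exceeds $C\sqrt n$ only with constant probability, not $o(1)$. A crude bound would therefore give $|Z|\le\omega(n)\sqrt n$ and an extra factor $\omega(n)$ in the length.

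To remove that factor I would first try an amortized merging argument. The idea is to show that a single word of length $O(\log n)$ typically merges a constant fraction of many disjoint pairs of the current image simultaneously, since fresh collisions happen in parallel. This would make the number of merge rounds $O(\log n)$ while the image is large, and only the final $O(\sqrt n)$ sequential merges would each cost $O(\log n)$. Making this parallel-collision estimate rigorous, under the dependencies inherited from the earlier exploration, is where I expect the real work to lie.
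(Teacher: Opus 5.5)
\textbf{The core lemma is unproved.} Your Phase~1 is fine, but the argument rests on the core lemma: all pairs are compressible by words of length $O(\log n)$, simultaneously, w.h.p. The route you sketch for it does not work. This lemma is nearly the theorem itself. Greedy merging alone would turn it into reset words of length $O(n\log n)$, already better than the quasi-linear bounds known before the cited work. Together with Phase~1 it gives the statement up to the tail issue below. For the same reason, the conditioning on $a$ that you list as the first obstacle is not one. If the lemma held w.h.p.\ for all pairs in the unconditioned model, Phase~2 would follow deterministically on the intersection of two w.h.p.\ events. The sketch fails in two concrete places:
\begin{enumerate}
\item[(i)] The per-pair bound $1-o(n^{-2})$ is false. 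With probability $(e^{-2}+o(1))n^{-2}$, a given state $p$ is fixed by both letters and no other transition enters $p$. Then no word compresses $\{p,q\}$, for any $q$. So a single pair fails with probability $\Omega(n^{-2})$, and the union bound over the $\Theta(n^2)$ pairs sums to a positive constant. The bad events are strongly correlated (all pairs through $p$ fail together), and a union bound over pairs cannot exploit that.
\item[(ii)] The exploration does not supply the fresh randomness it needs. At depth $\frac12\log_2 n$ you have tested order $\sqrt n$ same-word pairs, which gives order $n^{-1/2}$ expected compressions. Even constant success probability needs order $n$ tests, i.e.\ depth about $\log_2 n$. By then the trees from $p$ and $q$ have long saturated and the visited states are no longer fresh. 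The birthday collisions between the two trees have the form $\delta(p,u)=\delta(q,u')$ with $u\neq u'$, and these compress nothing.
\end{enumerate}
A uniform $O(\log n)$ pair lemma, if true, needs a global structural argument, and the proposal does not provide one.

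\textbf{The size of $Z$.} The number of cyclic points of $a$ is Rayleigh distributed on scale $\sqrt n$. So $|Z|\le C\sqrt n$ fails with probability about $e^{-C^2/2}$, and your bound is only $O(\omega(n)\sqrt n\log n)$ w.h.p.\ for $\omega(n)\to\infty$. The amortized repair fails at exactly the scale where you need it. Under your own fresh-map heuristic, one letter applied to an image of size $m\ll n$ produces about $m^2/(2n)$ merges. So a word of length $\ell=O(\log n)$ merges only a fraction of order $\ell m/n$ of the image. For $m=\omega(n)\sqrt n$ with $\omega$ slowly growing, that fraction is $o(1)$, not a constant. Shrinking the image from $\omega(n)\sqrt n$ to $O(\sqrt n)$ needs words of length of order $\sqrt n$ that behave like independent random maps. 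With only two letters whose transitions are already exposed, that is precisely the hard part.

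\textbf{Comparison with the cited proof.} The survey only cites this result; the proof of Chapuy and Perarnau avoids both pairwise merging and the count of $a$-cyclic points. By the probabilistic method (a moment computation over the words of length $(1+\varepsilon)\log_2 n$), they show that w.h.p.\ some such word $w$ acts on $Q$ as a tree rooted at a fixed point (the automaton is a $w$-tree). That tree has height $O(\sqrt n)$, so $w^{O(\sqrt n)}$ is a reset word. The $\sqrt n$ is the height of a random tree, the $\log n$ is $|w|$, and a single global existence statement replaces any union bound over pairs.
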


\begin{figure}[!htb]\centering\includegraphics[scale=.6]{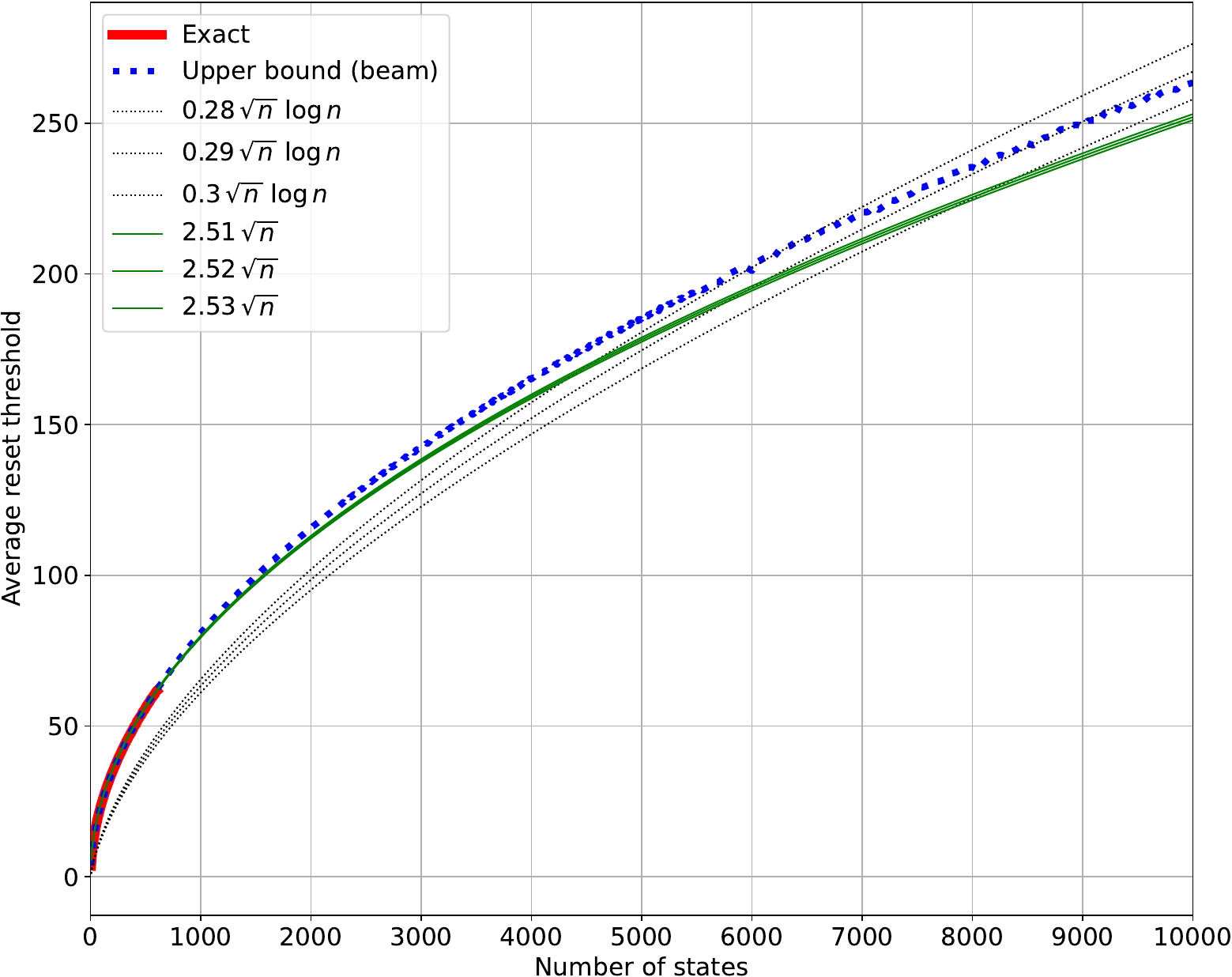}
\caption{The average reset threshold of random binary automata from experiments compared with $\varTheta(\sqrt{n})$ and $\varTheta(\sqrt{n} \log n)$ functions.}\label{fig:AvgRTFull}
\end{figure}
\begin{figure}[!htb]\centering\includegraphics[scale=.6]{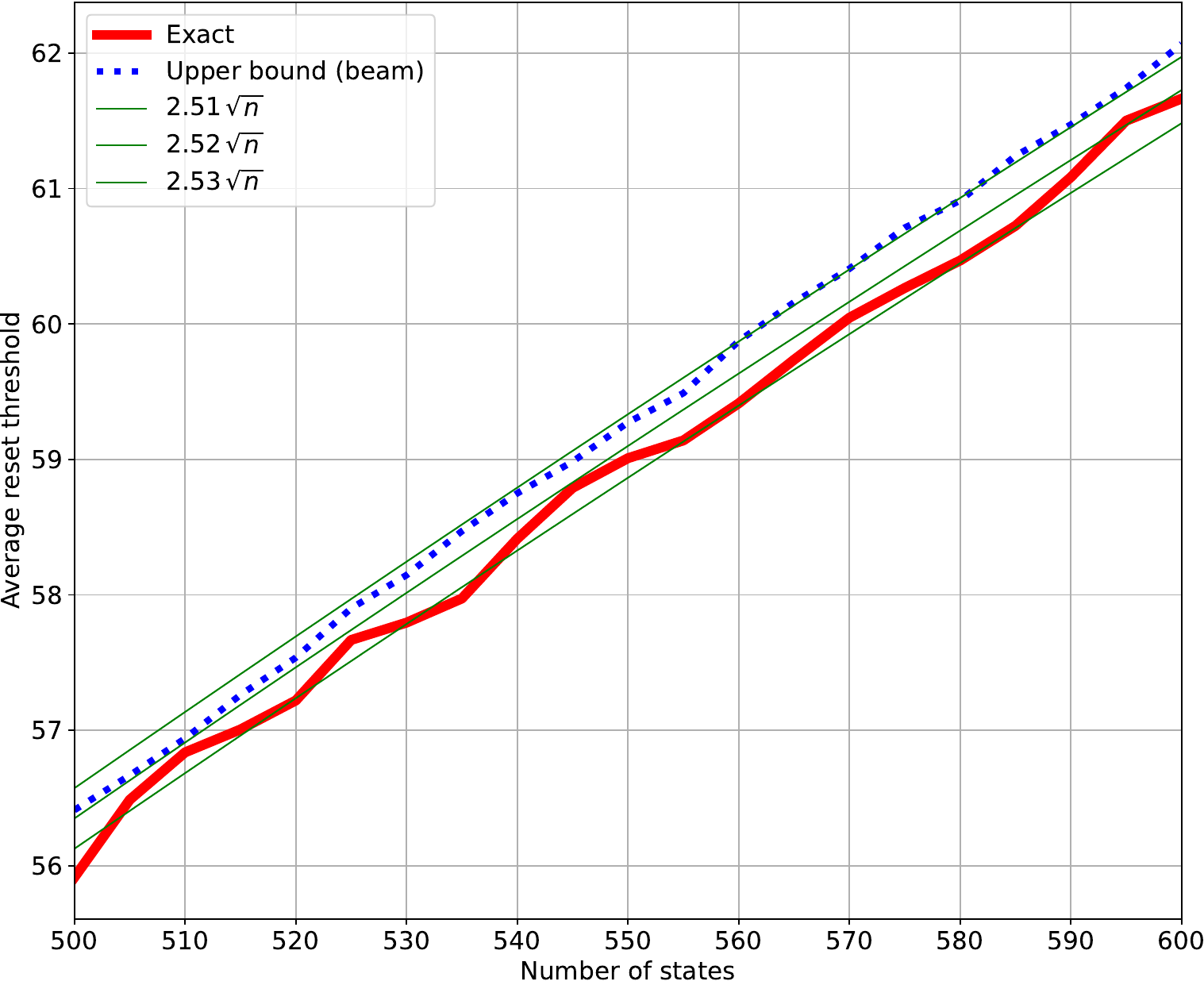}
\caption{The average reset threshold of random binary automata from experiments compared with $\varTheta(\sqrt{n})$ functions, zoomed to the range $n=500,\ldots,600$.}\label{fig:AvgRTZoom}
\end{figure}

There is experimental evidence suggesting the bound is not tight.
\Cref{fig:AvgRTFull} and \Cref{fig:AvgRTZoom} show the latest experimental results compared with several functions.
The results were obtained using extensive computation with the fastest known algorithm computing the exact reset threshold ($n=5,10,15,\ldots,600$) and a good heuristic algorithm providing an upper bound (up to $n \le 10,000$) (these follow from an extended study using the algorithm from~\cite{SZ2022ImprovedAlgorithmForFindingTheShortestSynchronizingWords}); for each $n$, there are at least $1,000$ automata sampled).

The following conjectures are discussed and seem to be supported by the experiments:

\begin{conjecture}[Average reset threshold (weak)]
A uniformly random binary automaton has reset threshold in $\varTheta(n^{1/2})$ with high probability.
\end{conjecture}

\begin{conjecture}[Average reset threshold (strong)]
The expected reset threshold of a uniformly random binary automaton is in $\varTheta(n^{1/2})$.
\end{conjecture}

There is a big gap between these two problems, as the strong one is much more difficult to deal with theoretically.
For the strong variant, we have only an upper bound of $\O(n^{1+\varepsilon})$, which follows from~\cite{CP2025ShortSynchronizingWordsForRandomAutomata} in combination with the linear-algebraic method \cite{BS2016AlgebraicSynchronizationCriterionAndComputingResetWords} to include the remaining cases of automata:
the high probability in \cite{CP2025ShortSynchronizingWordsForRandomAutomata} is $1-\tilde{\O}(1/n^{1/2}))$, which combined with the general cubic upper bound gives $\O(n^{5/4+o(1)}$, or $\O(n^{1+o(1)})$ assuming a quadratic upper bound on the reset threshold.

On the other hand, the known lower bound is only $n^{1/3 - \varepsilon}$ for all $\varepsilon > 0$ and has a simple proof (also in~\cite{CP2025ShortSynchronizingWordsForRandomAutomata}).

\subsection{Improving Linear-Algebraic Methods}

The linear-algebraic methods, which originate from early works of Pin (\cite{Pin1972Utilisation}), have demonstrated spectacular successes in proving better upper bounds on reset threshold for particular subclasses of automata (e.g., \cite{BBP2011QuadraticUpperBoundInOneCluster,BS2016AlgebraicSynchronizationCriterionAndComputingResetWords,Dubuc1998,GonzeJungers2016OnSynchronizingProbabilityFunction,Kari2003Eulerian,Pin1972Utilisation,Steinberg2011AveragingTrick,Steinberg2011OneClusterPrime}).

Furthermore, it is currently applied behind the best cubic upper bound as well (\cite{Shitov2019,S2018ImprovingTheUpperBound}, and in~\cite{Fr1982} for the old Pin-Frankl bound, although a bit differently).
Recently, using the technique, a quadratic upper bound ($2n^2 -7n + 7$) has been shown for the class of synchronizing automata with a transitive permutation group and where all letters have rank at least $n-1$ \cite{Zhu2024QuadraticUpperBoundTransitivePermutationGroup}.
However, the method by itself does not yield tight bounds in most cases.

\subsubsection{Introductory Definitions}

We associate a natural linear structure with an automaton $\mathrsfs{A}$.
By $\mathbb{R}^n$ we denote the real\footnote{For (almost) all proofs, it is enough to use rational numbers $\mathbb{Q}$ as well.} $n$-dimensional linear space of row vectors.
Without loss of generality, we assume that $Q=\{1,2,\dots,n\}$.
For a subset $S \subseteq Q$, $[S] \in \mathbb{R}^n$ is the \emph{characteristic 0-1 vector} of $S$, where the $q$-th entry is $1$ if $q \in S$, and $0$, otherwise.

Analogously, a word $w \in \Sigma^*$ corresponds to a linear transformation of $\mathbb{R}^n$.
By $[w]$, we denote the matrix of this transformation in the standard basis $[1],\ldots,[n]$ of $\mathbb{R}^n$.
Clearly, the matrix $[w]$ has exactly one non-zero entry in each row, because the automaton is deterministic.
In particular, $[w]$ is \emph{row stochastic}, i.e., the sum of entries in each row is equal to $1$, hence multiplying by $[w]$ preserves the sum of the entries of a vector.
Then we have $[uv]=[u][v]$ for every two words $u,v \in \Sigma^{*}$. 

Given a subset $S \subseteq Q$ and a word $w \in \Sigma^*$, the $[S][w]$ is the \emph{cumulative image}, which is equal to $[\delta(S,w)]$ if and only if two of states in $S$ were compressed by $w$.
However, a $q$-th entry of $[S][w]$ is non-zero if and only if $q \in \delta(S,w)$.

On the other hand, when considering preimages, we always have $[S][w]^\mathrm{T} = [\delta^{-1}(S,w)]$, where $[w]^\mathrm{T}$ is the transposition of $[w]$.

When $X$ is a set of vectors, by $\mathcal{V} = \lspan(X)$ we denote the linear subspace of $\mathbb{R}^n$ generated by the vectors from $X$ as a linear combination.
Let $\dim(V)$ denote the \emph{dimension} of $V$, and also for a set of vectors, $\dim(X)$ is $\dim(\lspan(X))$.

\subsubsection{Ascending Chain of Subspaces}

Let $\Sigma^i$ and $\Sigma^{\le i}$ denote all words over $\Sigma$ of length equal $i$ and at most $i$, respectively.

The common way to utilize linear algebra for synchronizing automata is to build a chain of subspaces starting from a subset $S$:
\[ L_i = \lspan(\{ [S][w] \mid w \in \Sigma^{\le i} \}),\text{ for $i=0,1,\ldots$}, \]
or, when working with preimages:
\[ L_i = \lspan(\{ [S][w]^\mathrm{T} \mid w \in \Sigma^{\le i}) \},\text{ for $i=0,1,\ldots$}. \]
The key argument is that if the next subspace (generated using longer words) has the same dimension as the previous one, then all the further subspaces stabilize and are the same.
So if reaching a certain dimension implies that there is a word of a desired property, then we have a linear bound on its length.

To work on a concrete example, we recall the proof of the upper bound on the reset threshold of an Eulerian automaton \cite{Kari2003Eulerian}, for its relative simplicity.
The utilized property of an Eulerian automaton is that, for every non-empty subset $S \subsetneq Q$, if a word $w \in \Sigma^i$ is such that $|\delta^{-1}(S,w)| < |S|$, then there exists a word $w' \in \Sigma^i$ extending $S$: $|\delta^{-1}(S,w)| > |S|$.

For a non-zero 0-1 vector $v$, let $\mathrm{norm_0}(v)$ be this vector normalized in the following way.
Let $|v|$ be the sum of the entries of $v$, i.e., the number of entries with $1$ in this case.
The entries with $1$ are replaced with $1/|v|$, and the entries with $0$ are replaced with $-1/(n-|v|)$.
The resulting vector $\mathrm{norm_0}(v)$ has the sum of entries equal to $0$.
Also, for a set of non-zero 0-1 vectors $V$, let $\mathrm{norm_0}(V)$ be the set of the normalized vectors from $V$.

For a non-empty subset $S \subsetneq Q$, we build the chain of subspaces:
\[ L_i = \lspan\left(\mathrm{norm_0}([S]) [w]^\mathrm{T} \mid w \in \Sigma^{\le i}\right),\text{ for $i=0,1,\ldots$}. \]
If for an $i$, there are vectors in $L_i$ with sum $\ne 0$, this means the number of positive and negative entries is different than in $\mathrm{norm_0}([S])$, hence there exists some word $w \in \Sigma^{-1}$ such that $|\delta^{-1}(S,w)| \ne |S|$.
By the Eulerian property, there exists an extending word for $S$ of length at most $i$.
Since the maximum possible dimension is $n$ and $L_0$ has dimension $1$, we have $i \le n-1$.

Applying the argument iteratively (at most $(n-2)$ times), starting from $\delta^{-1}(\{q\},a)$ for a letter that extends $\{q\}$, we end up with the upper bound $1+(n-2)(n-1)$ on the length of a reset word, i.e., $\delta^{-1}(\{q\},aw) = Q$, where $|w| = (n-2)(n-1)$.
This is the state-of-the-art upper bound on the reset threshold of an Eulerian automaton, whereas the best known lower bound is $\lfloor \frac{n^2-3}{2}\rfloor$ (conjectured to be tight) \cite{SV2016ExtremalEulerian}.

\subsubsection{Generalization and an Improvement}

Instead of always starting independently from a new subset $S$, we can build our chains of subspaces starting from more vectors (more initial subsets) at once.
In this way, we can increase the initial dimension, which will let the chains reach the maximum dimension faster.
As a drawback, this leads to conditional bounds: we will obtain an extending word for one of the initial subsets, but will not know for which one.

The idea of starting from more subsets has already appeared in the proofs for one-cluster automata, e.g., \cite{Steinberg2011AveragingTrick,Steinberg2011OneClusterPrime}, yet this is different (and independent) from our proposal, as iterations are applied to one subset that emerges from the previous one as usual.

First, for $S_1,\ldots,S_j$, let us define:
\[ L_0 = \lspan\left(\mathrm{norm_0}([S_1],\ldots,[S_j])\right); \quad L_i = \lspan\left(L_0 [w]^\mathrm{T} \mid w \in \Sigma^{\le i}\right),\text{ for $i=0,1,\ldots$} .\]
Then it gives an extending word $w$ for some of $S_1,\ldots,S_j$ of length at most $i \le n-\dim(L_0)$.

Now, to use it in the construction of a reset word:
Starting from $S_1 = \delta^{-1}(\{q\},a)$, suppose that we iteratively find the \emph{shortest} extending words $w_1,\ldots,w_k$ for the current subsets $S_1,\ldots,S_k$, ending up with $Q$, where $k \le n-2$.
We want to bound the (sum of the) lengths of these extending words.
For $S_1,\ldots,S_k$, we have the upper bound $n-\dim(\lspan(\mathrm{norm_0}([S_1],\ldots,[S_k])))$ on the length of one them.
We remove the extended subset and use the argument for the set of $k-1$ vectors, repeating it until no subset is left.

First, we relate the dimension of a normalized vectors' subspace with the dimension of the 0-1 vectors' subspace, which will simplify our dimension consideration to only the latter case.
Let $\mathbf{1}^n$ be the vector with all $1$s.

\begin{lemma}\label{lem:Norm0VS01}
Let $X$ be a set of 0-1 vectors in $\mathbb{R}^n$ that contain $\mathbf{1}^n$.
Then $\dim(X) = \dim(\mathrm{norm_0}(X)) + 1$.
\end{lemma}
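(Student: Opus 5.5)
The plan is to compare both spans with the sum-zero hyperplane $H=\{x\in\mathbb{R}^n \mid \sum_q x_q = 0\}$. The goal is to show
\[ \lspan(X) = \lspan(\mathrm{norm_0}(X)) \oplus \lspan(\mathbf{1}^n), \]
from which the dimension formula follows at once.

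One convention must be fixed first. $\mathrm{norm_0}$ is not defined on $\mathbf{1}^n$, because $n-|v|=0$ there, and it is not defined on the zero vector either. So I read $\mathrm{norm_0}(X)$ as $\mathrm{norm_0}$ applied to the vectors of $X$ other than $\mathbf{1}^n$ (and other than $0$, if present). This agrees with the convention that $\mathrm{norm_0}(\mathbf{1}^n)$ is the zero vector, which does not change a span. This is the only delicate point, and it is also why the hypothesis $\mathbf{1}^n\in X$ is needed: it supplies the extra dimension.

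\textbf{Step 1: explicit formula.} Let $v$ be a 0-1 vector with $k=|v|$ and $0<k<n$. By definition,
\[ \mathrm{norm_0}(v) = \tfrac{1}{k}v - \tfrac{1}{n-k}(\mathbf{1}^n - v) = \Bigl(\tfrac1k+\tfrac1{n-k}\Bigr)v - \tfrac{1}{n-k}\mathbf{1}^n . \]
So $\mathrm{norm_0}(v)\in\lspan(v,\mathbf{1}^n)$. The coefficient of $v$ is nonzero, so conversely $v\in\lspan(\mathrm{norm_0}(v),\mathbf{1}^n)$. Since $\mathbf{1}^n\in X$, this gives
\[ \lspan(X)=\lspan(\mathrm{norm_0}(X)\cup\{\mathbf{1}^n\}). \]

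\textbf{Step 2: directness.} Every $\mathrm{norm_0}(v)$ has entry sum $k\cdot\frac1k-(n-k)\cdot\frac1{n-k}=0$, so $\lspan(\mathrm{norm_0}(X))\subseteq H$. On the other hand, $\mathbf{1}^n$ has entry sum $n\neq 0$, so $\mathbf{1}^n\notin H$. Hence $\mathbf{1}^n$ is not in the span of $\mathrm{norm_0}(X)$, and adding it raises the dimension by exactly one. Therefore
\[ \dim(X)=\dim(\mathrm{norm_0}(X))+1. \]
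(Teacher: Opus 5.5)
Your proof is correct and is essentially the paper's argument. The paper also writes $\mathrm{norm_0}(v)$ as $(a+b)v - b\,\mathbf{1}^n$ with $a=1/|v|$ and $b=1/(n-|v|)$, inverts this using $\mathbf{1}^n$, and gets the extra dimension from the fact that normalized vectors have zero entry sum. Your explicit convention for $\mathrm{norm_0}(\mathbf{1}^n)$ and the zero vector, which the paper leaves implicit, is a useful clarification.
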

\begin{proof}
To show $\dim(X) \ge \dim(\mathrm{norm_0}(X))+1$, we observe that all the normalized vectors are in $\lspan(X)$: for a vector $\mathrm{norm_0}(v)$, which has only $a$ and $-b$ entries, we multiply $v$ by $a+b$ and subtract $\mathbf{1}^n$ multiplied by $b$.
Additionally, the normalized vectors cannot generate vectors with non-zero sum, so clearly $X$ generates a larger subspace.

To show $\dim(X) \le \dim(\mathrm{norm_0}(X))+1$, we observe that adding $\mathbf{1}^n$ to the set $\mathrm{norm_0}{X}$ allows us to generate all the vectors from $X$.
For the normalized vectors, we apply the inverse transform to the above: to a vector with $a$ and $-b$ entries, we add $\mathbf{1}^n$ multiplied by $b$ and then divide by $a+b$.
\end{proof}

The improvement depends on the dimension of the subspaces generated by the vectors of the subsets.
Unfortunately, in the general case, the initial dimension can be logarithmic in the number of vectors, but it can be linear in the best case.
We show an elementary combinatorial proof of the logarithmic lower bound.

\begin{lemma}\label{lem:DimLog}
Let $X$ be a set of non-zero 0-1 vectors in $\mathbb{R}^n$.
Then $\dim(X) \ge \lceil \log_2 |X| \rceil$.
\end{lemma}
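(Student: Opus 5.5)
The plan is to show that a subspace of dimension $d$ contains at most $2^d - 1$ non-zero 0-1 vectors. Then $|X| \le 2^{\dim(X)} - 1 < 2^{\dim(X)}$, so $\log_2 |X| < \dim(X)$. Since $\dim(X)$ is an integer, this gives $\dim(X) \ge \lceil \log_2 |X| \rceil$. It suffices to bound the number of all 0-1 vectors (including zero) in a $d$-dimensional subspace $V = \lspan(X)$ by $2^d$.

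The key step is a projection onto $d$ coordinates. Take a basis of $V$ and put it as the rows of a $d \times n$ matrix of rank $d$. This matrix has $d$ linearly independent columns; let $I \subseteq Q$ be the corresponding set of coordinates, $|I| = d$. Consider the restriction map $\pi\colon V \to \mathbb{R}^I$, $v \mapsto v|_I$. Because the chosen columns are independent, the $d \times d$ submatrix is invertible, so $\pi$ is a linear bijection. In particular, $\pi$ is injective on $V$: a vector of $V$ is determined by its entries on $I$.

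Now restrict attention to 0-1 vectors in $V$. Each maps under $\pi$ to a 0-1 vector in $\{0,1\}^I$, and there are only $2^d$ of those. By injectivity, $V$ contains at most $2^d$ 0-1 vectors, and hence at most $2^d - 1$ non-zero ones. Since $X$ consists of non-zero 0-1 vectors of $V$, we get $|X| \le 2^d - 1$, which completes the argument.

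The only step needing care is the existence of the coordinate set $I$ on which the restriction is injective. This is just row rank equals column rank, so I expect no real obstacle; the rest is counting. The bound is attained, for instance, by all non-empty unions of $d$ disjoint blocks, which indicates the logarithm cannot be improved in general.
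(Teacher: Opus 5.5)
Your proof is correct, and it takes a different route from the paper's. The paper argues combinatorially. It builds a binary tree on $X$ by repeatedly splitting the current vectors according to the first coordinate on which they disagree, and notes that this tree has height at least $\lceil \log_2 |X| \rceil$. It then walks up a longest root-to-leaf path, picking at each internal node a vector from the sibling subtree. A short case analysis shows each new vector is independent of those already chosen: either it has a $1$ where all chosen vectors have a $0$, or it violates an equality between two coordinates that all chosen vectors satisfy. At most one node is skipped. You instead count 0-1 points: a $d$-dimensional subspace is determined by its restriction to $d$ suitable coordinates, so it contains at most $2^d$ 0-1 vectors, one of which is zero. Your argument is shorter and needs no case analysis. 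It also proves the sharper inequality $|X| \le 2^{\dim(X)} - 1$, i.e.\ $\dim(X) \ge \lceil \log_2(|X|+1) \rceil$. This improves on the stated bound exactly when $|X|$ is a power of $2$. It coincides with the values $\lceil \log_2 k \rceil$ (for $k$ not a power of $2$) and $1+\log_2 k$ (for $k$ a power of $2$) attained in the paper's subsequent tightness lemma, so your version is optimal. In exchange, the paper's route gives an explicit procedure that extracts the independent vectors from $X$ itself using only the 0-1 structure. Since every spanning set contains a basis anyway, this adds no information about $\dim(X)$.
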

\begin{proof}
We construct a binary tree as follows.
For a set $X'$ of 0-1 vectors, we find the first distinguishing position, which is the smallest index such that there is a vector in $X'$ that has $0$ and another that has $1$ on this position.
Then we split the set into two, depending on whether they have $0$ or $1$ on this position.
For both subsets, we construct the subtree recursively.
If there is no distinguishing position, then we already have only one vector in the set, so we end with a leaf.
Since we have exactly $|X|$ leaves and the tree is binary, the height of the tree is at least $\lceil \log_2 |X|\rceil$ (the root is at level $0$).

We show that we can find a subset $Y \subseteq X$ of linearly independent vectors by following the longest path in the tree from the root to the leaf.
We start from the leaf, taking the stored vector as the first one into our set $Y$.
Being at a node $x$, we go up to its parent $x'$.
Note that all vectors taken so far have the same value on the distinguishing position of $x'$.
We follow the second edge from $x'$ up to any leaf of that subtree, taking a vector $v$ stored there.
Then $v$ has a different value at the distinguishing position of $x'$ than all the vectors in $Y$ taken so far.
\begin{enumerate}
\item If $v$ has $1$ at the position of $x$', then it is linearly independent with $Y$, since all vectors in $Y$ have $0$ on this position.
So we take $v$ to $Y$.
\item If $v$ has $0$ at the position of $x'$, then we have two subcases.
If all vectors in $Y$ and $v$ have $0$ on all earlier positions (the path from the root to $x'$ is labeled only by $0$s), then we ignore $v$ and continue.
Note that this subcase can happen only once, as all further cases must be (1).
Otherwise, all vectors in $Y$ and $v$ have $1$ on some common (earlier) position, so the vectors in $Y$ preserve the sum on that position and on the position of $x'$, whereas $v$ breaks it.
Hence, $v$ is linearly independent with $Y$ and we take it.
\end{enumerate}

Altogether, we visit $\lceil \log_2 k\rceil + 1$ nodes, where at most once we do not add a vector.
So we have at least $\lceil \log_2 k\rceil$ linearly independent vectors.
\end{proof}

For our purposes, from~\Cref{lem:Norm0VS01}, we can simply use $\dim(\mathrm{norm_0}(X)) \le \dim(X) - 1$ and consider just raw vectors $[S_1],\ldots,[S_k]$.
Consider the case of one initial subset separately.
From~\Cref{lem:DimLog}, our upper bounds on the length of the shortest extending words become:
\[ kn - \left(1 + \sum_{i=2}^k (\lceil \log_2 i \rceil - 1)\right) = kn + k-2 - \sum_{i=2}^k \lceil \log_2 i \rceil. \]

Simplifying by an analytic upper bound, we can already slightly improve the upper bound on the reset threshold of an Eulerian automaton:
\begin{corollary}
The reset threshold of a synchronizing $n$-state Eulerian automaton is at most
\[ n^2 - 2n + n-3 - \frac{(n-2) \ln (n-2)}{\ln 2} + \frac{(n-3)}{\ln 2} = n^2 - n \log_2 n + \O(n) .\]
\end{corollary}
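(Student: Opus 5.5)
The corollary follows from the generalized chain argument just described. We bound the total length of the extending words, add the first letter, and then estimate the logarithmic sum analytically.

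Start from a letter $a$ that extends a singleton, which exists since the automaton is synchronizing and Eulerian. Set $S_1=\delta^{-1}(\{q\},a)$, so $|S_1|\ge 2$. Iteratively choose shortest extending words $w_1,\dots,w_k$ for the current subsets until $Q$ is reached. Each extension increases the size by at least one, so $k\le n-2$.

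To bound $\sum_j |w_j|$, reorder the words by length from longest to shortest. At the stage where $i$ of the subsets remain, the generalized chain started from $\mathrm{norm_0}$ of these $i$ vectors gives an extending word for one of them of length at most $n-\dim(L_0)$. Since every $w_j$ is a shortest extending word for its own subset, the minimum of the remaining $|w_j|$ is at most $n-\dim(L_0)$. Removing that subset and repeating bounds the $i$-th smallest length this way.

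The dimension is controlled as follows. By \Cref{lem:Norm0VS01}, applied to the family together with $\mathbf{1}^n$, we get $\dim(\mathrm{norm_0}(X))\ge \dim(X\cup\{\mathbf{1}^n\})-1$. Since all $S_j\ne Q$, \Cref{lem:DimLog} gives $\dim(X\cup\{\mathbf 1^n\})\ge\lceil\log_2 (i+1)\rceil$, and in any case the dimension is at least $\lceil\log_2 i\rceil$. One point needs checking: the $S_j$ must be distinct. This holds because their sizes strictly increase, so the vectors are distinct. For $i=1$ we use the plain bound $n-1$, as in the single-subset case. Summing gives the displayed bound
\[ kn + k - 2 - \sum_{i=2}^k \lceil \log_2 i\rceil. \]
Adding $1$ for the letter $a$ and substituting $k=n-2$ yields $1+(n-2)n+(n-2)-2-\sum_{i=2}^{n-2}\lceil\log_2 i\rceil$. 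This expression is monotone in $k$ because each term $n+1-\lceil\log_2 i\rceil$ is positive, which justifies the substitution.

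The remaining step is analytic. We use $\lceil\log_2 i\rceil\ge\log_2 i$, so the sum is at least $\log_2((n-2)!)$. Then we bound $\ln m!\ge \int_1^{m}\ln x\,dx=m\ln m-m+1$ with $m=n-2$. This gives
\[ \sum_{i=2}^{n-2}\lceil\log_2 i\rceil \ge \frac{(n-2)\ln(n-2)-(n-3)}{\ln 2}, \]
which matches the two logarithmic terms in the statement. The constants then combine to $n^2-2n+n-3$, and the asymptotic form $n^2-n\log_2 n+\O(n)$ is immediate.

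I expect the main obstacle to be the bookkeeping in the reordering argument. We must make sure the shortest-word property transfers the bound from "some subset in the family" to the $i$-th smallest length. We must also pin down exactly which constant ($-1$ versus $+1$) arises from the one-subset case, so that the final constant matches the $n-3$ in the statement.
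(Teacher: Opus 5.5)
Your proposal is correct and follows essentially the same route as the paper: the same peeling argument with the multi-subset chain (per-stage bound $n+1-\lceil\log_2 i\rceil$, and $n-1$ for the last remaining subset) gives $kn+k-2-\sum_{i=2}^{k}\lceil\log_2 i\rceil$, followed by $\lceil\log_2 i\rceil\ge\log_2 i$ and the integral estimate; the only difference is cosmetic: you substitute $k=n-2$ into the discrete sum (justified by positivity of each added term) before estimating, while the paper estimates first and then uses that the resulting continuous bound is increasing in $k$. One small slip: at the stage with $i$ subsets left you are bounding the $i$-th \emph{largest} length (the minimum among the $i$ longest), not the $i$-th smallest, but this does not affect the sum.
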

\begin{proof}
\[ \sum_{i=2}^k \lceil \log_2 i \rceil \ge \sum_{i=1}^k \log_2 i \ge \int_{i=1}^k \log_2 i = \frac{k \ln k - k + 1}{\ln 2} .\]
Hence, our upper bound is bounded above:
$kn + k-2 - \sum_{i=2}^k \lceil \log_2 i \rceil \le kn + k-2 - \frac{k \ln k - k + 1}{\ln 2}$.
Since this is an increasing function, by setting the worst case $k=n-2$, we get:
\[ n^2 - 2n + n-4 - \frac{(n-2) \ln (n-2)}{\ln 2} + \frac{(n-3)}{\ln 2} = n^2 - n \log_2 n + \O(n).\]
For the final bound, we need to add $1$ for the initial extending letter.
\end{proof}

Of course, this is a little improvement, which merely highlights the non-optimality of the technique and its limits.
Yet, it can be employed in most of the proofs of this kind, in both image and preimage variants, weighted, etc.
One could go through all of them, decreasing the bounds by subtracting about $n \log_2 n$.
For instance, it leads to an upper bound on the avoiding threshold of order $n^2 - n \log_2 n + \O(n)$.

Unfortunately, the logarithmic upper bound on the initial dimension is tight in general: trivially, $2^k$ vectors that cover all 0-1 combinations on $k$ positions generate the dimension $\log_2 k$.
The bound is also tight if all the vectors have pairwise different sums (as it in our case).

\begin{lemma}
For all $k \le n$, there is a set $X$ of $k$ 0-1 vectors in $\mathbb{R}^n$ with sums of entries $1,2,\ldots,k$ such that $\dim(X) = \lceil \log_2 k \rceil$ (when $k$ is not a power of $2$) or $\dim(X) = 1 + \log_2 k$ (otherwise).
\end{lemma}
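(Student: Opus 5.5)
We will build $X$ explicitly out of a few disjoint ``blocks'' of coordinates, so that every vector of $X$ is a union of blocks and the span is controlled by the number of blocks. Let $m = \lfloor \log_2 k \rfloor + 1$ be the number of binary digits of $k$. Then $m = \lceil \log_2 k \rceil$ when $k$ is not a power of $2$, and $m = 1 + \log_2 k$ otherwise, so the statement amounts to $\dim(X) = m$.

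\textbf{Construction.} We partition the first $k$ coordinates into disjoint blocks $B_0,\ldots,B_{m-1}$:
\begin{itemize}
\item $|B_j| = 2^j$ for $0 \le j \le m-2$;
\item $|B_{m-1}| = r := k - (2^{m-1}-1)$.
\end{itemize}
Since $2^{m-1} \le k \le 2^m - 1$, we have $1 \le r \le 2^{m-1}$. The blocks occupy exactly $(2^{m-1}-1) + r = k \le n$ coordinates, so they fit in $\mathbb{R}^n$. For each $s \in \{1,\ldots,k\}$ we define a 0-1 vector $v_s$ with sum $s$:
\begin{itemize}
\item if $s \le 2^{m-1}-1$, let $v_s$ be the indicator of $\bigcup\{B_j : j \le m-2,\ \text{bit } j \text{ of } s \text{ is } 1\}$;
\item if $s \ge 2^{m-1}$, put $t = s - r$, so that $2^{m-1} - r \le t \le k - r = 2^{m-1}-1$ and $t \ge 0$, and let $v_s$ be the indicator of $B_{m-1}$ together with the blocks $B_j$, $j \le m-2$, given by the binary digits of $t$.
\end{itemize}
Then $X = \{v_1,\ldots,v_k\}$ consists of $k$ vectors (they are distinct since their sums differ) with sums $1,\ldots,k$.

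\textbf{Upper bound on the dimension.} Every $v_s$ is a sum of block indicators $[B_0],\ldots,[B_{m-1}]$. Hence $\dim(X) \le m$.

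\textbf{Lower bound on the dimension.} We show that all block indicators lie in $\lspan(X)$, which gives $\dim(X) \ge m$.
\begin{itemize}
\item For $j \le m-2$, the vector $v_{2^j}$ has $2^j \le 2^{m-2} < 2^{m-1}$, so it falls in the first case, and it equals $[B_j]$.
\item For the last block, $v_k$ falls in the second case with $t = 2^{m-1}-1$, so $v_k = [B_{m-1}] + \sum_{j \le m-2} [B_j]$. Subtracting the already obtained $[B_j]$ leaves $[B_{m-1}] \in \lspan(X)$.
\end{itemize}
The blocks are nonempty and disjoint, so $[B_0],\ldots,[B_{m-1}]$ are linearly independent, and $\dim(X) = m$.

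For $k$ not a power of $2$, the lower bound also follows directly from \Cref{lem:DimLog}, but the explicit argument above covers both cases uniformly.

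The only delicate step is sizing the last block so that two things hold at once: the total number of used coordinates is at most $k \le n$, and every sum in $[2^{m-1}, k]$ is still representable. A naive choice of all power-of-two blocks would need $2^m - 1$ coordinates, which may exceed $n$. The choice $r = k - 2^{m-1} + 1$, together with the check $0 \le t \le 2^{m-1}-1$, handles both requirements.
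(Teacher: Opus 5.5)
Your proof is correct and uses essentially the paper's construction: disjoint blocks of sizes $2^0,\ldots,2^{m-2}$ plus one remainder block of size $k-2^{m-1}+1$, with every vector a union of blocks. The paper reaches this through a case split on whether $k+1$ is a power of $2$, whereas you treat both cases uniformly with $m=\lfloor\log_2 k\rfloor+1$, and your check that every block indicator lies in $\lspan(X)$ (via $v_{2^j}$ and $v_k$) makes explicit the matching lower bound on the dimension that the paper leaves implicit when it counts generating vectors.
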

\begin{proof}
Let $j$ be the largest integer such that $2^j-1 \le k$, hence $j = \lfloor\log_2(k+1)\rfloor$.
We set $j$ vectors with the sum of the entries equal to $2^0,2^1,\ldots,2^{j-1}$, respectively, and where the $1$s appear at pairwise disjoint positions.
The above bound ensures that the number of needed positions is at most $k$.
We need to add the remaining $k-j$ vectors: note that each natural number up to $2^j-1$ is expressible as the sum of our powers of $2$, hence we can take vectors that are generated from our $j$ generating vectors.

If $2^j-1 = k$, then we already have $k$ vectors and the dimension is equal to $j$.
Then $k+1$ is a power of $2$, thus $j = \lfloor\log_2(k+1)\rfloor = \log_2(k+1) = \lceil\log_2 k\rceil$.

If $2^j-1 < k$, then for the vectors with $2^j,\ldots,k$ $1$s, we need one more generating vector with $1$s on $k-2^j+1$ so far unused positions.
Then for each $2^j,\ldots,k$, we can find a vector with that sum which can be generated using the additional generating vector and a subset of our $j$ vectors.
So we have $\lfloor\log_2(k+1)\rfloor + 1$ generating vectors.
Since $k+1$ is not a power of $2$, this is equal to $\lfloor\log_2 k\rfloor + 1$.
If $k$ is not a power of $2$, we end with $\lceil\log_2 k\rceil$.
Otherwise, we end with $1 + \log_2 k$.
\end{proof}

However, it might led to better upper bounds, as the logarithmic worst-case is unlikely to happen with vectors in the context of an automaton.
The following is an informal question on that:

\begin{openproblem}
What are the tight upper bounds for the initial dimensions of the subspace of vectors of subsets appearing from an iterative application of the linear-algebraic method for constructing a word?
\end{openproblem}

\subsection*{Acknowledgments}

I thank Mikhail Berlinkov, Paweł Gawrychowski, Robert Ferens, Vladimir Gusev, Jarrko Kari, Igor Rystsov, and Andrew Ryzhikov for discussions on the described problems, which resulted in some of the observations.
\bibliographystyle{eptcs}
\bibliography{bibliography}
\end{document}